%% file: arxiv_Bo_Yu_Yang.tex
\documentclass[conference,letterpaper]{IEEEtran}

\input{./macro.tex}

\usepackage[utf8]{inputenc} 
\usepackage[T1]{fontenc}
\usepackage{url}
\usepackage{ifthen}
\usepackage{cite}
\usepackage[cmex10]{amsmath} 

\usepackage{amssymb, amsthm, dsfont, braket, mathtools, shadethm, xfrac}
\usepackage{paralist, fancyhdr, etoolbox}
\usepackage{algorithmic}
\usepackage{algorithm}
\usepackage{graphicx}
\usepackage{textcomp}
\usepackage{xcolor}
\usepackage{stfloats}
\usepackage{flushend}
\usepackage[hidelinks]{hyperref}
\usepackage{tikz}
\usetikzlibrary{positioning,calc,shapes,arrows}
\tikzset{
block/.style = {draw, fill=white, rectangle, minimum height=3em, minimum width=3em},
tmp/.style  = {coordinate}, 
sum/.style= {draw, fill=white, circle, node distance=1cm},
input/.style = {coordinate},
output/.style= {coordinate},
pinstyle/.style = {pin edge={to-,thin,black}
}
}

\usepackage{cases}
\usepackage{indentfirst,csquotes}
\usepackage[T1]{fontenc}
\newtheorem{theo}{Theorem}
\newtheorem{defn}{Definition}
\newtheorem{exam}{Example}
\newtheorem{lemma}{Lemma}
\newtheorem{prop}{Proposition}
\newtheorem{coro}{Corollary}
\newtheorem{remark}{Remark}
\usepackage{balance}
\usepackage[normalem]{ulem} 
\usepackage{amsmath,amsthm,amssymb,bm,bbm,dsfont,braket}
\usepackage[mathscr]{eucal}

\newcommand{\sh}{\kern-0.08em$^\textbf{\#}$\hspace{-3pt}}
\renewcommand{\b}{\kern-0.06em$\flat$}

\makeatother
\begin{document}
\title{Single-Period Portfolio Selection\\
via Information Projection} 


\author{%
  \IEEEauthorblockN{Bo-Yu Yang and Michael Gastpar}
  \IEEEauthorblockA{School of Computer and Communication Sciences\\
EPFL, Lausanne, Switzerland\\
Email: \href{mailto:bo-yu.yang@epfl.ch}{bo-yu.yang@epfl.ch}, \href{mailto:michael.gastpar@epfl.ch}{michael.gastpar@epfl.ch}}
}

\maketitle


\begin{abstract}
We study the single-period portfolio selection problem under Constant Relative Risk-Aversion (CRRA) utility through the information-theoretic lens.
Assuming only that the market payoff vector has finite support, we show that the Certainty-Equivalent (CE) growth rate under CRRA utility can be decomposed into a portfolio-induced Rényi divergence term, a Rényi entropy term of the risk-tilted market law, and a log-partition term.
In this setting, the Rényi order has a clear operational meaning: it exactly coincides with the investor's coefficient of relative risk aversion.
We further show that CRRA portfolio selection is equivalent to a Rényi information-projection problem.
Using a variational representation of Rényi divergence, we obtain a Blahut--Arimoto-style alternating optimization with a closed-form auxiliary update and a KL-type portfolio step.
In the low risk-aversion regime, this method empirically requires fewer iterations than both direct CRRA utility optimization and Cover's method.
\end{abstract}

\begin{IEEEkeywords}
Portfolio Selection,
Rényi Divergence,
Information Measures,
Information Projection,
Alternating Optimization
\end{IEEEkeywords}

\section{Introduction}

Maximizing reward or utility under uncertainty is a central goal across reinforcement learning \cite{Sutton1998RL,Peters2010REPS}, bandits \cite{Agrawal2013Thompson,Russo2016IT_Thompson}, and finance \cite{Cochrane2009asset,Back2017asset,Campbell2017Financial}, and it also appears in information theory through gambling \cite{Kelly1956infor,Bleuler2020bet} and portfolio theory \cite[Ch.\,16]{Cover2006}.
One natural setting where these perspectives meet is \textit{portfolio selection}: 
in a single-period model, an investor allocates wealth according to a portfolio vector \(\mb b\) before a random market payoff vector \(\mb X\) is realized.
Here \(X_i\) denotes the nonnegative payoff per unit wealth invested in asset \(i\), following Cover's price-relative convention \cite{Cover1984LogInvestment}.
For example, an asset bought at \(20\) and sold at \(30\) has payoff \(X_i=30/20=1.5\).
With unit initial wealth, a portfolio \(\mb b\) yields realized wealth \(w=\langle \mb b,\mb x\rangle\), which is evaluated by the utility function \(u(w)\).

Because future payoffs are uncertain, we evaluate portfolios through expected utility, a framework rooted in Bernoulli's 1738 treatment of risk \cite{Bernoulli1954risk} and the von Neumann--Morgenstern axiomatization of utility in the 1940s \cite{Von1947games}.

The closest information-theoretic antecedents come from gambling.
For logarithmic growth rate (c.f.~\(G_1\) in Def.~\ref{def:CE_growth}), Kelly's seminal work~\cite{Kelly1956infor} and the treatment in~\cite[Ch.\,6]{Cover2006} connect optimal growth to Shannon entropy and KL divergence.
More recently, Bleuler, Lapidoth, and Pfister \cite{Bleuler2020bet} extended this perspective in horse betting beyond logarithmic growth to a family of risk-sensitive utilities corresponding to what we call the certainty-equivalent (CE) growth rate \(G_{\rho_u}\) in Definition~\ref{def:CE_growth}, with R{\'e}nyi divergence playing the central role.
Their gambling results rely on the diagonal structure of the payoff matrix (Def.~\ref{def:payoff_matrix}), where outcomes are mutually exclusive; in contrast, general portfolio selection admits dense payoff matrices.\footnote{For example, in horse betting only one horse wins a race, whereas in a financial market multiple assets may have positive payoffs simultaneously.}
This raises the question of whether the information-theoretic picture behind gambling extends to general portfolio selection.

In the single-period CRRA portfolio selection problem with the given payoff law $p$, the optimal portfolio $\mb b^\star$ satisfies the first-order condition (FOC)
\begin{subnumcases} 
{\mathbb{E}_{p}\!\left[
\frac{\langle \mb{b}^\star,\mb{X}\rangle^{-\rho_u}} {\mathbb{E}_{p}\!\left[\langle \mb{b}^\star,\mb{X}\rangle^{1-\rho_u}\right]} X_i \right]}
= 1, & $b_i^\star > 0$,\label{eq:crra_foc}\\
\le 1, & $b_i^\star = 0$,
\end{subnumcases}
for each asset \(i\in[m]\); see, e.g., \cite[Ch.~16]{Cover2006}, \cite{Mossin1968MultiPeriod}.
This optimality condition, obtained from the KKT conditions (or Lagrange duality), is implicit in \(\mathbf b^\star\).
Nevertheless, when maximizing expected log-wealth \((\rho_u=1)\), Cover's algorithm \cite{Cover1984LogInvestment} gives the multiplicative update in~\eqref{eq:cover_update}, which monotonically improves the expected log-wealth, with objective values converging to the global optimum.
Rather than directly extending Cover's multiplicatively updating algorithm to CRRA utility, we discover another form of ``duality'' in Theorem~\ref{thm:renyi_projection}, where the maximizer $\mb b^\star$ is exactly the minimizer in the information projection problem.

In Csisz{\'a}r's \(I\)-divergence geometry, a reference law is projected onto a convex class by minimizing KL divergence, yielding Pythagorean-type characterizations and alternating-minimization procedures~\cite{Csiszar1975IDivergence,CsiszarTusnady1984InforGeo}.
R{\'e}nyi projection theory develops an analogous picture on \(\alpha\)-convex sets and \(\alpha\)-linear families~\cite{Kumar2016RenyiProjection}.
Our projection has a different origin: the target family is not imposed through moment or marginal constraints, but is generated by the payoff matrix and the feasible portfolios.
After symmetrization, CRRA portfolio selection projects the risk-tilted market law \(\tilde p\), under R{\'e}nyi divergence, onto the attainable normalized wealth laws
\(\{\bar q_{\mb b}:\mb b\in\Delta^{m-1}\}\), where \(\bar q_{\mb b}\) is obtained by normalizing the portfolio-induced wealth \(q_{\mb b}(\mb x)=\langle \mb b,\mb x\rangle\).

The contribution of this work is structural.
Theorem~\ref{thm:G_rho} decomposes the CE growth rate on a finite symmetric covering of the payoff support into a portfolio-induced R{\'e}nyi divergence, a R{\'e}nyi entropy term, and a portfolio-independent log-partition term.
Since every finite payoff support admits such a covering, this yields Theorem~\ref{thm:renyi_projection}: CRRA portfolio selection is equivalent to a R{\'e}nyi information-projection problem, with R{\'e}nyi order equal to the investor's relative risk aversion.
Theorem~\ref{thm:renyi_variational} then converts this projection problem into a Blahut--Arimoto-style alternating optimization with a closed-form auxiliary update and a convex portfolio step.
The resulting information-projection exponentiated-gradient (Info-Proj EG) method empirically reaches smaller optimization error in fewer accepted first-order iterations in low-risk-aversion regimes, compared with naively applying EG to the original objective and with Cover's method.

\section{Problem Formulation}
\subsection{Notation}
Random vectors are denoted by uppercase letters and their realizations by lowercase letters.
Vectors are boldface, inner products are written as \(\langle \mb b,\mb x\rangle\).
We write \(\Delta^{m-1}:=\{\mb b\in\mathbb{R}_+^m:\langle \mb b,\mds{1}\rangle=1\}\) for the portfolio simplex, and \(\Delta(\mcal A)\) for the probability simplex on a finite alphabet \(\mcal A\).
Unless otherwise stated, asset indices use \(i\in[m]\), while market-state indices use \(j\in[k]\).
The \(j\)-th entry of a vector \(\mathbf{x}\) is denoted by \(x_j\).
We write \(p,q,r\) for probability measures, \(\mathbb E_p[\,\cdot\,]\) for expectation, and \(\supp(\cdot)\) for support. 
We denote the all-one vector by $\mds{1}$.

\subsection{Preliminaries}

Consider a rational investor who chooses a portfolio to maximize the expected utility $\mds{E}[u(W)]$ of wealth $W$ induced by the market.
As a quick warm-up, when we say $W_1$ is \textit{preferred} to $W_2$, we write $\mds{E}\lb u(W_1) \rb \geq \mds{E}\lb u(W_2) \rb$.

\begin{defn}[Risk Aversion \cite{Back2017asset}]
An investor is said to be \textbf{risk averse} if $u(\mu_W) \ge \mds{E}[u(W)]$ for every random variable $W$ with mean $\mu_W$.
Equivalently, $u(\mu_W) \ge \mds{E}[u(\mu_W+Z)]$ for every zero-mean random variable $Z$.
\end{defn}

Since risk aversion is characterized by Jensen's inequality, it naturally corresponds to the
\textit{concavity}\footnote{Intuitively, utility exhibits diminishing marginal utility of wealth: as wealth grows, each additional unit of payoff contributes less utility. Hence a fair gamble can lower expected utility, because the utility loss in bad states outweighs the utility gain in good states.}
of the utility function.
To quantify the degree of concavity of $u$, we introduce the coefficients of risk aversion, which play a role analogous to curvature.

\begin{defn}[Coefficients of Risk Aversion \cite{Back2017asset}]\label{def:CRRA}
Given a wealth level $w$ and a utility function $w \mapsto u(w)$, the \textbf{coefficient of absolute risk aversion} is defined as
\begin{equation} \label{eq:abs_risk_aver}
\alpha_u(w) := -\frac{u''(w)}{u'(w)},
\end{equation}
while the \textbf{coefficient of relative risk aversion} is defined as
\begin{equation} \label{eq:rel_risk_aver}
\rho_u(w) := w\alpha_u(w) = -\frac{w u''(w)}{u'(w)}.
\end{equation}
\end{defn}
Suppose that an investor chooses a portfolio vector
\(\mb b=(b_1,\ldots,b_m)\) from the simplex \(\Delta^{m-1}\), and let
\(\mb X=(X_1,\ldots,X_m)\) denote the nonnegative random payoff vector of the \(m\) assets.
The resulting wealth is
\begin{equation}
    W:=\braket{\mb b,\mb X}.
\end{equation}
Following the finite-state portfolio-choice setting in the financial literature \cite{Back2017asset}, we assume that \(\mb X\) takes values in a finite set \(\mcal X\), with support
\(\supp(p_{\mb X})=\{\mb x_1,\ldots,\mb x_k\}\subseteq\mcal X\).
Throughout the paper, we restrict attention to portfolios satisfying
\(\braket{\mb b,\mb x}>0\) for all \(\mb x\in\supp(p_{\mb X})\).

\begin{defn}[Payoff Matrix \cite{Back2017asset}] \label{def:payoff_matrix}
Let \(\mcal{X}\) be the state space of the market payoff vector, and suppose that $\supp(p_{\mb{X}})=\{\mb{x}_1,\dots,\mb{x}_k\}\subseteq \mcal{X}$.
The \textbf{payoff matrix} is defined by
\begin{equation}
    M_X :=
    \begin{pmatrix}
        \mb{x}_1^{\mathsf T}\\
        \vdots\\
        \mb{x}_k^{\mathsf T}
    \end{pmatrix}
    \in \mathbb{R}_+^{k\times m}.
\end{equation}
Thus, for any portfolio $\mb{b}\in\Delta^{m-1}$, the wealth vector realized in state $\mb{x}_j$ is
$(M_X\mb{b})_j := \braket{\mb{b}, \mb{x}_j}$ for each $j\in[k]$.
\end{defn}

To model a specific investor, we assume that the utility function $u$ has the same relative risk-aversion coefficient $\rho_u$ at each wealth level $w$.
In fact, for this standard modeling \cite{Mossin1968MultiPeriod,Back2017asset,Campbell2017Financial}, the investor's utility function $u(w)$ admits a closed form as follows.

\begin{prop}[CRRA Utility {\cite{Mossin1968MultiPeriod}}]
The constant relative risk aversion (CRRA) utility function is given by any affine transform of
\begin{subnumcases}
{u(w) = }
\frac{1}{1-\rho_u} w^{1-\rho_u},& $\rho_u \in\,(0, 1) \cup (1, \infty);$\\
\log w,& $\rho_u = 1$.
\end{subnumcases}
\end{prop}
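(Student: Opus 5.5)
The plan is to reduce the statement to an ordinary differential equation. By Definition~\ref{def:CRRA}, requiring $\rho_u(w)=\rho_u$ for every $w>0$ means $-w u''(w)/u'(w)=\rho_u$. We assume $u$ is twice differentiable and strictly increasing, so that $u'(w)>0$ and the coefficient is well defined. Setting $v:=u'$, the condition becomes the first-order linear equation $v'(w)/v(w)=-\rho_u/w$ on $(0,\infty)$. Equivalently, $\frac{d}{dw}\log v(w)=-\rho_u\frac{d}{dw}\log w$.

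\emph{Step 1: solve for $u'$.} Integrating on $(0,\infty)$ gives $\log v(w)=-\rho_u\log w+c$. Hence $u'(w)=A\,w^{-\rho_u}$ with $A=e^{c}>0$. To rule out other solutions, we can differentiate $w^{\rho_u}u'(w)$. Its derivative is $w^{\rho_u-1}\bigl(\rho_u u'(w)+w u''(w)\bigr)=0$, so it is constant on the connected domain $(0,\infty)$. This gives uniqueness up to the positive constant $A$.

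\emph{Step 2: integrate once more, splitting into cases.} If $\rho_u\neq 1$, an antiderivative of $A w^{-\rho_u}$ is $A\frac{w^{1-\rho_u}}{1-\rho_u}+B$. If $\rho_u=1$, it is $A\log w+B$. In both cases $u$ is an affine transform $A\,\tilde u+B$ with $A>0$ of the displayed function $\tilde u$. Conversely, we check directly that each displayed function, and every positive affine transform of it, satisfies $-w\tilde u''/\tilde u'=\rho_u$. The reason is that both $u''$ and $u'$ scale by $A$, so the ratio is unchanged, and the additive constant $B$ disappears under differentiation. We also check that $\tilde u'(w)=w^{-\rho_u}>0$ and $\tilde u''<0$, so each $\tilde u$ is increasing and concave, consistent with risk aversion.

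\emph{Main obstacle.} The only delicate point is the regularity and sign convention. The statement implicitly assumes $u\in C^2$ with $u'>0$ on $(0,\infty)$, and the ratio is undefined where $u'=0$. The affine transforms must have positive slope to remain increasing; with a negative slope the ratio is still $\rho_u$ but the function is no longer a utility. I would state these assumptions explicitly. Under them, the constant-of-integration argument in Step 1 is rigorous because the domain $(0,\infty)$ is connected.
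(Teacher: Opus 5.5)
Your proof is correct. The paper gives no derivation of its own and only cites Mossin (1968); your argument is the standard one behind that citation: show $w^{\rho_u}u'(w)$ is constant on the connected domain $(0,\infty)$, integrate once more (with $\rho_u=1$ giving $\log w$), and check the converse. One small refinement is that $u'>0$ is not needed for the structural claim, since $u'\neq 0$ (required for the ratio to be defined) together with continuity of $u'$ already forces a constant sign, so the same argument gives $u=A\tilde u+B$ with $A\neq 0$ of either sign, matching the paper's ``any (nonconstant) affine transform''.
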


With this modeling of portfolio selection, the investor's optimization problem becomes
\begin{equation} \label{eq:max_CRRA}
    \max_{\mb{b}\in \mathbb{R}^m} \mds{E}_p[u(\braket{\mb{b}, \mb{X}})]\quad \text{ subject to } \sum_{i=1}^m b_i = 1 \text{ and } b_i \geq 0,
\end{equation}
for every asset $i \in [1:m]$.

Next, to build a bridge connecting the portfolio selection problem to information projection problem, we introduce the concept of \textit{certainty-equivalent growth rate}, which should be regarded as an extension of the utility function in gambling \cite[Eq.~(13)]{Bleuler2020bet}. 

\begin{defn}[Certainty-Equivalent Growth Rate] \label{def:CE_growth}
For a portfolio vector $\mb{b}$ and a relative risk aversion coefficient $\rho_u \in (0,\infty)$, the \textbf{Certainty-Equivalent (CE) Growth Rate} is defined as:
\begin{subnumcases}
{G_{\rho_u}(W) := }
\frac{1}{1-\rho_u} \log \mds{E}_p \lb \braket{\mb{b}, \mb{X}}^{1-\rho_u} \rb, & $\rho_u \neq 1;$\\
\mds{E}_p\lb\log W\rb, & $\rho_u = 1;$
\end{subnumcases}
\end{defn}
We set $G_1(W):=\mds{E}_p[\log W]$ by continuity.
\begin{remark}
\label{rmk:utility_growth_equiv}
For CRRA utility, \(G_{\rho_u}(W)\) is the logarithm of the certainty-equivalent wealth:
\begin{equation}
w_{\mrm{CE}} := u_{\rho_u}^{-1}\!\left(\mds{E}_p[u(W)]\right) =
\exp\!\left(G_{\rho_u}(W)\right).
\end{equation}
Equivalently, for \(\rho_u\neq1\),
\begin{equation}
\mds{E}_p[W^{1-\rho_u}]=\exp((1-\rho_u)G_{\rho_u}(W)).    
\end{equation}
Notably, the preference order for the investor is preserved.
Therefore, maximizing expected CRRA utility is equivalent to maximizing \(G_{\rho_u}(W)\) for every \(\rho_u>0\), where the logarithmic case ($\rho_u = 1$) is obtained by continuity.
\end{remark}

\begin{defn}[Symmetric Covering Set]
\label{def:X_cube}
Let \(\mcal{X}\subseteq\mds{R}_+^m\) be a finite payoff reference set.
A finite set \(\mcal{X}^{\square}\supseteq\mcal{X}\) is called a \textbf{symmetric covering set} if there exists \(\gamma\in\mds{R}\) such that
\begin{equation}
\sum_{\mb{x}\in\mcal{X}^{\square}}\mb{x}
=
\gamma\mds{1}.
\label{eq:symmetric_covering_sum}
\end{equation}
Equivalently, when the vectors in \(\mcal X^\square\) are summed coordinate-wise, each asset coordinate has the same total.
\end{defn}

One concrete construction is obtained by reflection.
Let \(M_X\in\mds{R}_+^{k\times m}\) be the payoff matrix in Definition~\ref{def:payoff_matrix}, and define
\(a_{\max}:=\max_{j\in[k],\,i\in[m]}(M_X)_{ji}\).
Then \[\mcal{X}^{\square} := \mcal{X}\cup\{a_{\max}\mds{1}-\mb{x}:\mb{x}\in\mcal{X}\}\] is a symmetric covering set. Indeed, the reflection
\(\mb{x}\mapsto a_{\max}\mds{1}-\mb{x}\) is a bijection on \(\mcal{X}^{\square}\), so we have
\[\sum_{\mb{x}\in\mcal{X}^{\square}}\mb{x} = \frac{a_{\max}|\mcal{X}^{\square}|}{2}\mds{1}.\]

To compare the market law with the portfolio-induced wealth law, we regard both objects as measures on the common finite reference domain \(\mcal X^\square\).
This domain is a modeling choice and need not coincide with the realized support \(\supp(p_{\mb X})\).
Throughout this section, \(p_{\mb X}\) is extended to \(\mcal X^\square\) by setting \(p(\mb x)=0\) outside \(\supp(p_{\mb X})\).
We now define the risk-tilted market law and the portfolio-induced wealth law on this common domain.

\begin{defn}[Tilted Measure \cite{Verdu2015alpha}] \label{def:tilted_p}
For the probability law \(p\) of a random vector \(\mb{X}\), we define its \textbf{tilted measure} of order \(\beta\) on the finite set $\mcal{A}$ by
\begin{equation}
    \tilde{p}_{\beta}(\mb{x}) := \frac{p(\mb{x})^{\beta}}{Z_p},
\end{equation}
where \(Z_p := \sum_{\mb{x} \in \mcal{A}}p(\mb{x})^{\beta}\).
\end{defn}

Throughout this work, we mainly focus on the case $\beta = \frac{1}{\rho_u}$, and denote the \textit{risk-tilted measure} as $\tilde{p} := \tilde{p}_{\frac{1}{\rho_u}}$ on a finite set \(\mcal{A} = \mcal X^\square\).

\begin{defn}[Portfolio-Induced Measure] \label{def:q_b}
Given a portfolio \(\mb b\in\Delta^{m-1}\), define the induced \textbf{wealth measure} on \(\mcal X^\square\) by
\(q_{\mb b}(\mb x):=\braket{\mb b,\mb x}\).
The corresponding normalized probability measure is
\begin{equation}
    \bar q_{\mb b}(\mb x):=\frac{\braket{\mb b,\mb x}}{Z_q},
\end{equation}
where the (wealth) partition function is \(Z_q:=\sum_{\mb{x}\in\mcal X^\square}\braket{\mb b,\mb x}\).
\end{defn}

\begin{exam} \label{ex:linear_bin}
Let \(\mcal X=\{0,c\}^m\), where \(c>0\).
Then \(\mcal X\) is already a symmetric covering set, so we may take \(\mcal X^\square=\mcal X\).
The partition function $Z_q$ can be calculated as
\begin{align*}
&Z_q = \sum_{\mb{x}\in\mathcal X^\square} \braket{\mb{b},\mb{x}} = c \cdot 2^{m-1}.
\end{align*}
\end{exam}

The full calculation is given in Appendix~\ref{app:b_indep_pf}.
Example~\ref{ex:linear_bin} illustrates why symmetrization is useful:
when the reference set is symmetric, the normalizing constant $Z_q$ does not depend on the portfolio \(\mb b\).
Lemma~\ref{lemma:b_indep} extends this observation to general symmetric covering sets.

\begin{lemma}[Portfolio-Independence of the Partition Function]
\label{lemma:b_indep}
If \(\mcal{X}^{\square}\) is a symmetric covering set, then the partition function
\begin{equation*}
Z_q = \sum_{ \mb{x} \in \mcal{X}^{\square}} \langle\mb{b},\mb{x}\rangle
\end{equation*}
is independent of the portfolio choice \(\mb{b}\in\Delta^{m-1}\).
\end{lemma}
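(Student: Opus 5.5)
The plan is to use linearity of the inner product in its second argument to pull the sum over \(\mcal X^\square\) inside, and then invoke the defining property \eqref{eq:symmetric_covering_sum} of a symmetric covering set. Concretely, for any \(\mb b\in\Delta^{m-1}\) we will write
\begin{equation*}
Z_q=\sum_{\mb x\in\mcal X^\square}\langle \mb b,\mb x\rangle
=\Big\langle \mb b,\sum_{\mb x\in\mcal X^\square}\mb x\Big\rangle
=\langle \mb b,\gamma\mds 1\rangle
=\gamma\langle \mb b,\mds 1\rangle .
\end{equation*}
The interchange of the inner product with the sum is legitimate because \(\mcal X^\square\) is finite by Definition~\ref{def:X_cube}, so this is a finite sum of linear functionals.

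The second step is to use the simplex constraint. Since \(\mb b\in\Delta^{m-1}\), we have \(\langle \mb b,\mds 1\rangle=\sum_{i=1}^m b_i=1\), which gives \(Z_q=\gamma\). The constant \(\gamma\) is determined by \(\mcal X^\square\) alone and does not involve \(\mb b\). This proves that \(Z_q\) is independent of the portfolio. As a consistency check, we will note that in the reflection construction \(\gamma=a_{\max}|\mcal X^\square|/2\), and in Example~\ref{ex:linear_bin} \(\gamma=c\,2^{m-1}\), which matches the computed \(Z_q\).

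There is no real obstacle here; the argument is a two-line computation. The only point worth stating explicitly is that independence of \(\mb b\) uses the normalization \(\sum_i b_i=1\). Without it, \(Z_q=\gamma\sum_i b_i\) would scale with \(\mb b\). It is also worth remarking that \(\gamma>0\) whenever \(\mcal X^\square\) contains a nonzero payoff vector, since all entries are nonnegative. This ensures \(\bar q_{\mb b}\) in Definition~\ref{def:q_b} is a well-defined probability measure.
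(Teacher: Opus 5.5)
Your proof is correct and matches the paper's argument: pull the finite sum inside the inner product, use \(\sum_{\mb{x}\in\mcal{X}^{\square}}\mb{x}=\gamma\mds{1}\), and conclude \(Z_q=\gamma\) from \(\langle\mb{b},\mds{1}\rangle=1\). Your side remark that \(\gamma>0\) assumes \(\mcal{X}^{\square}\subseteq\mathbb{R}_+^m\), which Definition~\ref{def:X_cube} imposes only on \(\mcal{X}\), but the lemma does not need that remark.
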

\begin{proof}
Since \(\mcal{X}^{\square}\) is a symmetric covering set, there exists \(\gamma\in\mds{R}\) such that
\begin{equation}
\sum_{\mb{x}\in\mcal{X}^{\square}}\mb{x} = \gamma\mds{1}.
\end{equation}
Therefore, using linearity of the inner product,
\begin{align}
Z_q = \sum_{\mb{x}\in\mcal{X}^{\square}}
\langle\mb{b},\mb{x}\rangle = \left\langle \mb{b}, \sum_{\mb{x}\in\mcal{X}^{\square}}\mb{x} \right\rangle =
\langle\mb{b},\gamma\mds{1}\rangle =
\gamma,
\end{align}
where the last equality follows from the simplex constraint
\(\langle\mb{b},\mds{1}\rangle=1\).
Thus \(Z_q\) depends only on the symmetric covering set and is independent of \(\mb{b}\).
\end{proof}

\begin{defn}[Rényi Divergence and Rényi Entropy \cite{Renyi1961Entropy}] \label{def:infor_measure}
Let $p$ and $q$ be probability distributions on a finite alphabet $\mcal{A}$, and let $\alpha \in (0,1)\cup(1,\infty)$.
The \textbf{Rényi divergence} of order $\alpha$ from $p$ to $q$ is defined as 
\begin{equation}
    \rD{\alpha}{p}{q} := \frac{1}{\alpha-1} \log \sum_{x \in \mcal{A}} p(x)^\alpha q(x)^{1-\alpha}. 
\end{equation}
The \textbf{Rényi entropy} of order $\alpha$ is defined as 
\begin{equation}
    \rET{\alpha}{p} := \frac{1}{1-\alpha}
\log \sum_{x \in \mcal{A}} p(x)^\alpha.
\end{equation}
By continuity at $\alpha=1$, these reduce to the Kullback--Leibler divergence and the Shannon entropy, respectively.
\end{defn}

\section{Main Results}
With the single-period CRRA portfolio selection problem formulated, now we characterize the CE growth rate (Definition~\ref{def:CE_growth}) in terms of information measures (Definition~\ref{def:infor_measure}).
\begin{theo}[Characterization of CE Growth Rate] \label{thm:G_rho}
For any coefficient of relative risk aversion $\rho_u \in (0, \infty)$, the CE growth rate can be expressed as 
\begin{align}
G_{\rho_u}(W) = - \rD{\rho_u}{\tilde{p}}{\bar{q}_\mb{b}} - \rET{\rho_u}{\tilde{p}} + \log Z_q.
\end{align}
\end{theo}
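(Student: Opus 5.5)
The plan is a direct computation: substitute the definitions of \(\tilde p\) (Definition~\ref{def:tilted_p} with \(\beta=1/\rho_u\)) and \(\bar q_{\mb b}\) (Definition~\ref{def:q_b}) into the Rényi divergence and entropy of Definition~\ref{def:infor_measure}. The goal is to show that all normalizing constants cancel except \(\log Z_q\). I will treat \(\rho_u\neq 1\) first and then \(\rho_u=1\) separately. Lemma~\ref{lemma:b_indep} is not needed for the identity itself. It only matters afterwards, to read \(\log Z_q\) as portfolio-independent.

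\textbf{Case \(\alpha:=\rho_u\neq1\).} Write \(Z_p=\sum_{\mb x\in\mcal X^\square}p(\mb x)^{1/\alpha}\). Then \(\tilde p(\mb x)^\alpha=p(\mb x)/Z_p^{\alpha}\) and \(\bar q_{\mb b}(\mb x)^{1-\alpha}=\langle\mb b,\mb x\rangle^{1-\alpha}/Z_q^{1-\alpha}\), so
\begin{equation*}
\sum_{\mb x\in\mcal X^\square}\tilde p(\mb x)^\alpha\bar q_{\mb b}(\mb x)^{1-\alpha}
=\frac{\mds{E}_p\!\left[\langle\mb b,\mb X\rangle^{1-\alpha}\right]}{Z_p^{\alpha}Z_q^{1-\alpha}} .
\end{equation*}
This works because \(p\) vanishes off its support, so the sum over \(\mcal X^\square\) is just the expectation under \(p\). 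Taking \(\frac{1}{\alpha-1}\log\) and using Remark~\ref{rmk:utility_growth_equiv} gives
\begin{equation*}
-\rD{\alpha}{\tilde p}{\bar q_{\mb b}}=G_{\alpha}(W)+\tfrac{\alpha}{\alpha-1}\log Z_p-\log Z_q .
\end{equation*}
Similarly, \(\sum_{\mb x}\tilde p(\mb x)^\alpha=Z_p^{-\alpha}\sum_{\mb x}p(\mb x)=Z_p^{-\alpha}\). Hence \(\rET{\alpha}{\tilde p}=\frac{\alpha}{\alpha-1}\log Z_p\), and adding the two displays yields the claim.

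\textbf{Case \(\rho_u=1\).} Here \(\tilde p=p\). Expanding the KL divergence and the Shannon entropy,
\begin{equation*}
-D(p\|\bar q_{\mb b})-H(p)=\sum_{\mb x}p(\mb x)\log\bar q_{\mb b}(\mb x)=\mds{E}_p[\log W]-\log Z_q,
\end{equation*}
which is \(G_1(W)-\log Z_q\). Alternatively, this case follows from the \(\rho_u\neq1\) case by letting \(\rho_u\to1\), consistent with the continuity conventions in Definitions~\ref{def:CE_growth} and~\ref{def:infor_measure}.

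\textbf{Main obstacle: well-definedness on the enlarged domain \(\mcal X^\square\).} The algebra above is routine; the delicate point is making sure every term is finite. Three things need checking.
\begin{itemize}
\item \(Z_q>0\). This holds because \(\langle\mb b,\mb x\rangle>0\) on \(\supp(p)\), which is assumed nonempty.
\item Points of \(\mcal X^\square\) where \(\langle\mb b,\mb x\rangle=0\). These lie outside \(\supp(p)\), so \(p=\tilde p=0\) there. For \(\alpha>1\) the summand has the form \(0\cdot\infty\), and for \(\rho_u=1\) it has the form \(0\log 0\). I will adopt the standard convention that terms with \(\tilde p(\mb x)=0\) contribute zero. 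This is exactly what makes the sum equal to \(\mds{E}_p\) over the support.
\item The divergence itself. With that convention, \(\rD{\alpha}{\tilde p}{\bar q_{\mb b}}\) is finite because \(\supp(\tilde p)\subseteq\supp(\bar q_{\mb b})\).
\end{itemize}
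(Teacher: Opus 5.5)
Your proposal is correct and follows essentially the same computation as the paper. You substitute the definitions of $\tilde p$ and $\bar q_{\mb b}$, note that the $Z_p$ factors cancel between the divergence and entropy terms because $\sum_{\mb x}\tilde p(\mb x)^{\rho_u}=Z_p^{-\rho_u}$, and pull out $\log Z_q$. The only differences are that you check $\rho_u=1$ by direct expansion rather than the paper's continuity appeal, which is slightly cleaner since $\tilde p$ itself varies with $\rho_u$, and that you make explicit the zero-probability conventions on $\mcal X^\square\setminus\supp(p)$ that the paper leaves implicit.
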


\begin{remark}[Interpretation of the first term $\rD{\rho_u}{\tilde{p}}{\bar{q}_{\mathbf{b}}}$ in Theorem~\ref{thm:G_rho}]
The Rényi divergence term measures the mismatch between the investor's risk-tilted belief $\tilde{p}$ and the portfolio-induced wealth distribution $\bar{q}_{\mathbf{b}}$.
Notably, its order $\rho_u$ is exactly the investor's coefficient of relative risk aversion.
\end{remark}

\begin{remark}[Interpretation of the second term $\rET{\rho_u}{\tilde{p}}$ in Theorem~\ref{thm:G_rho}]
The Rényi entropy term quantifies the uncertainty of the risk-tilted law $\tilde{p}$.
It is independent of the portfolio $\mathbf{b}$ and depends only on the market law through the investor's tilted belief.
This implies that the investor's higher uncertainty about the market will lower $G_{\rho_u}(W)$ and hence the expected CRRA utility.
\end{remark}

\begin{proof}[Proof of Theorem~\ref{thm:G_rho}]
For $\rho_u \neq 1$, we rewrite CE growth rate as
\begin{align}
    G_{\rho_u}(W)
    =& \frac{1}{1-\rho_u} \log \sum_{\mb{x} \in {\mcal{X}}^{\square}} \lp p(\mb{x})^{\frac{1}{\rho_u}}\rp^{\rho_u} \braket{\mb{b}, \mb{x}}^{1-\rho_u}\notag\\ &- \frac{1}{1-\rho_u} \log \sum_{\mb{x} \in {\mcal{X}}^{\square}} \lp p(\mb{x})^{\frac{1}{\rho_u}}\rp^{\rho_u}\\
    =& \frac{1}{1-\rho_u} \log \frac{\sum_{\mb{x} \in {\mcal{X}}^{\square}} \lp \frac{p(\mb{x})^{\frac{1}{\rho_u}}}{Z_p}\rp^{\rho_u} \braket{\mb{b}, \mb{x}}^{1-\rho_u}}{\sum_{\mb{x} \in {\mcal{X}}^{\square}} \lp \frac{p(\mb{x})^{\frac{1}{\rho_u}}}{Z_p}\rp^{\rho_u}}\\
    =& \frac{1}{1-\rho_u} \log \frac{\sum_{\mb{x} \in {\mcal{X}}^{\square}} \tilde{p}(\mb{x})^{\rho_u} \bar{q}_{\mb{b}}(\mb{x})^{1-\rho_u}}{\sum_{\mb{x} \in {\mcal{X}}^{\square}} \tilde{p}(\mb{x})^{\rho_u}}  + \log Z_q\\
    =& -\rD{\rho_u}{\tilde{p}}{\bar{q}_\mb{b}} - \rET{\rho_u}{\tilde{p}} + \log Z_q,
\end{align}
and the $\rho_u = 1$ case holds by the continuity of Rényi divergence \cite[Thm.\,7]{Van2014Renyi} and Rényi entropy \cite{Renyi1961Entropy}. 
\end{proof}

By the non-negativity of Rényi divergence \cite[Thm.\,8]{Van2014Renyi}, Corollary~\ref{cor:EU_upper_bnd} provides an upper bound for the expected CRRA utility.
Its tightness is governed by the projection error: the optimization projects the risk-tilted law $\tilde p$ onto the portfolio-induced linear family $\{\bar q_{\mb b}:\mb b\in\Delta^{m-1}\}$, and equality holds precisely when $\tilde p$ already lies in that family.
\begin{coro} \label{cor:EU_upper_bnd} 
For $\rho_u \in (0,1) \cup (1,\infty)$,
\begin{equation}
    \mds{E}_{p}\lb u(W)\rb \leq \frac{1}{1-\rho_u} \e^{(1-\rho_u) \lp - \rET{\rho_u}{\tilde{p}} + \log Z_q \rp};
\end{equation}
for $\rho_u = 1$,
\begin{equation}
    \mds{E}_{p}\lb u(W)\rb = \mds{E}_{p}\lb \log W\rb \leq - \ET{p} + \log Z_q.
\end{equation}
\end{coro}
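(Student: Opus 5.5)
The plan is to derive Corollary~\ref{cor:EU_upper_bnd} directly from Theorem~\ref{thm:G_rho}, using the non-negativity of Rényi divergence \cite[Thm.\,8]{Van2014Renyi}. The first step is to drop the divergence term. Since \(\rD{\rho_u}{\tilde p}{\bar q_{\mb b}}\ge 0\) for every \(\mb b\in\Delta^{m-1}\), Theorem~\ref{thm:G_rho} gives
\[
G_{\rho_u}(W)\le -\rET{\rho_u}{\tilde p}+\log Z_q .
\]
By Lemma~\ref{lemma:b_indep}, the right-hand side does not depend on \(\mb b\).

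The second step translates this bound on \(G_{\rho_u}\) into a bound on expected utility, via Remark~\ref{rmk:utility_growth_equiv}. For \(\rho_u\neq1\) we have
\[
\mds{E}_p[u(W)]=\frac{1}{1-\rho_u}\mds{E}_p[W^{1-\rho_u}]=\frac{1}{1-\rho_u}\exp\bigl((1-\rho_u)G_{\rho_u}(W)\bigr).
\]
Call \(\phi(g):=\frac{1}{1-\rho_u}e^{(1-\rho_u)g}\), so the bound follows by substituting the inequality on \(G_{\rho_u}\). To do this we need \(\phi\) to be increasing in \(g\) for both signs of \(1-\rho_u\). Its derivative is \(e^{(1-\rho_u)g}>0\), so it is increasing in both cases; this is the same fact as the order preservation noted in Remark~\ref{rmk:utility_growth_equiv}. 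Applying \(\phi\) to both sides then gives the stated bound. This monotonicity check is the only place where some care is needed, because the case \(\rho_u>1\) involves a negative prefactor together with a reversed inequality inside the exponent, and the two sign changes cancel.

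For \(\rho_u=1\), we have \(\mds{E}_p[\log W]=G_1(W)\). The tilt exponent is \(\beta=1/\rho_u=1\), so \(\tilde p=p\) (the normalizer \(Z_p\) equals \(\sum_{\mb x}p(\mb x)=1\), since \(p\) is extended by zero on \(\mcal X^\square\)). The Rényi entropy of order \(1\) is the Shannon entropy \(\ET{p}\), by the continuity stated in Definition~\ref{def:infor_measure}. The KL divergence is nonnegative, so the first step directly yields \(\mds{E}_p[\log W]\le -\ET{p}+\log Z_q\).
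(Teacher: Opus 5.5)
Your proof is correct and follows the route the paper indicates. The paper derives the corollary by dropping the nonnegative R\'enyi divergence term from Theorem~\ref{thm:G_rho} (with $\tilde p = p$ and KL nonnegativity at $\rho_u=1$) and transferring the bound on $G_{\rho_u}$ to expected utility through the order-preserving map of Remark~\ref{rmk:utility_growth_equiv}; your check that $g\mapsto \frac{1}{1-\rho_u}\exp\bigl((1-\rho_u)g\bigr)$ has positive derivative for both signs of $1-\rho_u$ just makes that last step explicit.
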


Before deriving an algorithm, we first convert the CE-growth decomposition established in Theorem~\ref{thm:G_rho} into an equivalent projection problem, where we project the risk-tilted law \(\tilde p\) onto the portfolio-induced family \(\{\bar q_{\mb b}:\mb b\in\Delta^{m-1}\}\) under Rényi divergence.

\begin{theo} \label{thm:renyi_projection}
If the random payoff vector $\mb{X}$ has finite support $\mcal{X}$, then for all $\rho_u \in (0, \infty)$, 
\begin{align}
\argmax_{\mb{b}\in\Delta^{m-1}} \mds{E}_{p}[u(W)] = \argmin_{\mb{b}\in\Delta^{m-1}}
\rD{\rho_u}{\tilde p}{\bar q_{\mb b}}.
\end{align}
\end{theo}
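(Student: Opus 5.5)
We would prove Theorem~\ref{thm:renyi_projection} by combining Remark~\ref{rmk:utility_growth_equiv} with the decomposition in Theorem~\ref{thm:G_rho}. The finite-support hypothesis is what lets us build a symmetric covering set. For instance, the reflection construction $\mcal X^\square=\mcal X\cup\{a_{\max}\mds 1-\mb x:\mb x\in\mcal X\}$ is finite and satisfies Definition~\ref{def:X_cube}. We extend $p$ by zero to $\mcal X^\square$ and take $\tilde p$ and $\bar q_{\mb b}$ to be the risk-tilted law and the normalized wealth law on this common domain. The extension does not change $\mds E_p[u(W)]$ or $G_{\rho_u}(W)$, because the added points carry zero mass.

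The first step is to replace the expected-utility objective by the CE growth rate. For $\rho_u\neq1$, $\mds E_p[u(W)]=\frac{1}{1-\rho_u}\exp\big((1-\rho_u)G_{\rho_u}(W)\big)$. The map $g\mapsto\frac{1}{1-\rho_u}e^{(1-\rho_u)g}$ is strictly increasing in $g$ for both signs of $1-\rho_u$, since its derivative is $e^{(1-\rho_u)g}>0$. For $\rho_u=1$ we have $G_1=\mds E_p[\log W]$ directly. Since strictly increasing transformations preserve maximizer sets, $\argmax_{\mb b}\mds E_p[u(W)]=\argmax_{\mb b}G_{\rho_u}(W)$ over the admissible portfolios.

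The second step applies Theorem~\ref{thm:G_rho}: $G_{\rho_u}(W)=-\rD{\rho_u}{\tilde p}{\bar q_{\mb b}}-\rET{\rho_u}{\tilde p}+\log Z_q$. The entropy term depends only on $p$ and $\rho_u$. By Lemma~\ref{lemma:b_indep}, $Z_q=\gamma$ does not depend on $\mb b$, because $\mcal X^\square$ is a symmetric covering set. So $G_{\rho_u}$ equals $-\rD{\rho_u}{\tilde p}{\bar q_{\mb b}}$ plus a constant, and maximizing it is the same as minimizing the divergence, which gives the claimed identity of argmax and argmin sets.

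The main obstacle is checking that the terms in Theorem~\ref{thm:G_rho} are well defined and finite uniformly over the feasible set, so that we never compare $\pm\infty$ values.
\begin{itemize}
\item \textbf{Entropy term.} In $\rET{\rho_u}{\tilde p}$, $Z_p$ is computed on $\mcal X^\square$, where zero-mass points contribute nothing. This term is a constant.
\item \textbf{Divergence term.} We need $\bar q_{\mb b}>0$ on $\supp\tilde p=\supp p$, which is exactly the standing restriction $\braket{\mb b,\mb x}>0$ on the support. Points where $\bar q_{\mb b}=0$ but $\tilde p=0$ contribute $0^{\rho_u}\cdot 0^{1-\rho_u}$. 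For $\rho_u>1$ this is an indeterminate form; we adopt the standard convention that it is $0$, or we simply note that only $\supp p$ enters the sum.
\item \textbf{Normalization.} We need $\gamma>0$, so that $\log Z_q$ is finite. This holds as long as some payoff is positive, which the positivity restriction guarantees.
\end{itemize}
Finally, the case $\rho_u=1$ follows by the same argument with KL divergence and Shannon entropy. Alternatively, it follows from the $\rho_u=1$ case of Theorem~\ref{thm:G_rho}, where $\tilde p=p$, and no limiting argument on argmax sets is needed.
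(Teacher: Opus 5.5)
Your proposal is correct and follows essentially the same route as the paper. You reduce expected utility to the CE growth rate via the strictly increasing map of Remark~\ref{rmk:utility_growth_equiv}, then use the decomposition of Theorem~\ref{thm:G_rho}, the portfolio-independence of $Z_q$ from Lemma~\ref{lemma:b_indep}, and the $\mathbf{b}$-independence of the entropy term; your well-definedness checks and your direct treatment of $\rho_u=1$ (in place of the paper's appeal to continuity) are sound refinements of that argument.
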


\begin{proof}
From Remark~\ref{rmk:utility_growth_equiv}, we observe that maximizing expected CRRA utility is equivalent to maximizing the CE growth rate $G_{\rho_u}$.
Also, by Lemma~\ref{lemma:b_indep} we know that \(Z_q\) is independent of the portfolio choice $\mb b$ after identifying a symmetric covering set.
Therefore, maximizing $G_{\rho_u}$ is equivalent to minimizing \(\rD{\rho_u}{\tilde p}{\bar q_{\mb b}}\). 
Finally, this reduces  to the KL projection \(\KLD{p}{\bar q_{\mb b}}\) at \(\rho_u=1\) by continuity.
\end{proof}

\begin{prop}[A variational formula for R\'enyi divergence {\cite[Thm.\,1]{Shayevitz2010Note} \cite[Thm.\,30]{Van2014Renyi} \cite[Lem.\,2.1]{Esposito2025Sibson}}] 
\label{prop:var_renyi}
Let \(p\) and \(q\) be probability distributions on a finite alphabet \(\mcal A\), and let \(\alpha\in(0,1)\cup(1,\infty)\).
Then we have
\begin{equation*}
(1-\alpha)\rD{\alpha}{p}{q} = \min_{r\in\Delta(\mcal A)} \left\{ \alpha\KLD{r}{p}+(1-\alpha)\KLD{r}{q} \right\}.
\end{equation*}
 The unique minimizer is
\[r^\star(x) = \frac{p(x)^\alpha q(x)^{1-\alpha}}{Z_\alpha(p,q)}\]
with the normalizer \[ Z_\alpha(p,q):=\sum_{x\in\mcal A}p(x)^\alpha q(x)^{1-\alpha} \in (0,\infty).\]
\end{prop}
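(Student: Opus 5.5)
The plan is to prove the identity directly, via a Gibbs-type argument. Fix \(p,q\in\Delta(\mcal A)\) and \(\alpha\in(0,1)\cup(1,\infty)\). For any \(r\in\Delta(\mcal A)\), the objective \(F(r):=\alpha\KLD{r}{p}+(1-\alpha)\KLD{r}{q}\) can be expanded, using \(\alpha+(1-\alpha)=1\), as
\begin{equation*}
F(r)=\sum_{x} r(x)\log\frac{r(x)}{p(x)^{\alpha}q(x)^{1-\alpha}}.
\end{equation*}
Next I rewrite the denominator as \(Z_\alpha(p,q)\,r^\star(x)\). This gives
\begin{equation*}
F(r)=\KLD{r}{r^\star}-\log Z_\alpha(p,q).
\end{equation*}
By the definition of Rényi divergence in Definition~\ref{def:infor_measure},
\begin{equation*}
-\log Z_\alpha(p,q)=(1-\alpha)\rD{\alpha}{p}{q}.
\end{equation*}
Nonnegativity of KL divergence, with equality iff \(r=r^\star\), then gives both the value of the minimum and the uniqueness of the minimizer.

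The main obstacle is the bookkeeping of supports and infinite values. I first need \(Z_\alpha(p,q)>0\), and this depends on \(\alpha\).

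\begin{itemize}
\item For \(\alpha\in(0,1)\), positivity requires \(\supp p\cap\supp q\neq\emptyset\). I will note that \(Z_\alpha=0\) otherwise makes both sides \(+\infty\) (with \(1-\alpha>0\)), so the identity still holds in the extended sense. Alternatively, I restrict to the case where \(Z_\alpha(p,q)\in(0,\infty)\), as asserted in the statement.
\item For \(\alpha>1\), the term \(q^{1-\alpha}\) may blow up where \(q=0<p\). In that case \(Z_\alpha=\infty\) and \(\rD{\alpha}{p}{q}=\infty\), so the left side is \(-\infty\).
\end{itemize}

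The case \(\alpha>1\) also needs care with the sign. There the coefficient \(1-\alpha\) is negative, so \(F(r)\) is a difference of KL terms. I must check that the decomposition above remains valid, including for \(r\) with \(r(x)>0\) where \(p(x)=0\) or \(q(x)=0\); in those cases one of the terms is \(+\infty\). I handle this by treating the two ranges separately.

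\begin{itemize}
\item When \(\alpha\in(0,1)\), both coefficients are positive. Any \(r\) not supported on \(\supp p\cap\supp q\) gives \(F(r)=+\infty\), so I restrict to \(r\) on that set, where all logarithms are finite.
\item When \(\alpha>1\), I assume \(\supp p\subseteq\supp q\), so that \(Z_\alpha<\infty\). If \(r\) puts mass outside \(\supp p\), then \(\KLD{r}{p}=+\infty\), and this dominates any finite \((1-\alpha)\KLD{r}{q}\). The only exception is when \(\KLD{r}{q}\) is also infinite. To avoid the resulting \(\infty-\infty\), I adopt the convention that the objective is \(+\infty\) unless \(\supp r\subseteq\supp p\). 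With this convention, the minimum is attained among \(r\) supported on \(\supp p\), where the expansion is legitimate.
\end{itemize}

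Finally, \(r^\star\) is supported on \(\supp p\cap\supp q\) and is a valid probability law once \(Z_\alpha\in(0,\infty)\). Plugging \(r=r^\star\) gives \(\KLD{r^\star}{r^\star}=0\), which completes the argument.
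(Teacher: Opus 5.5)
Your argument is correct and is the standard proof of this result. The paper gives no proof of its own; it defers to Shayevitz and van Erven--Harremo\"es. Those references rest on the same identity $\alpha\KLD{r}{p}+(1-\alpha)\KLD{r}{q}=\KLD{r}{r^\star}-\log Z_\alpha(p,q)$ on the finite-objective region, combined with nonnegativity of KL. Your convention for $\alpha>1$ (objective $+\infty$ whenever $\supp r\not\subseteq\supp p$) is the usual way to resolve $\infty-\infty$.

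You are also right that the clause $Z_\alpha(p,q)\in(0,\infty)$ must be read as a hypothesis. It fails for disjoint supports when $\alpha<1$, and when $\supp p\not\subseteq\supp q$ for $\alpha>1$; in both cases $r^\star$ is undefined. So restricting to it is necessary, not a loss of generality.
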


The information projection view becomes more algorithmic once the Rényi divergence is replaced by its variational representation.
Applying Proposition~\ref{prop:var_renyi} to Theorem~\ref{thm:renyi_projection} gives the following Blahut--Arimoto-style alternating optimization.
\begin{theo}[Alternating Optimization for Expected CRRA Utility] \label{thm:renyi_variational}
\begin{align*}
&\quad\argmax_{\mb{b} \in \Delta^{m-1}} \mds{E}_{p}\lb u(W)\rb
= \argmax_{\mb{b} \in \Delta^{m-1}} G_{\rho_u}(W) = \\
&\left\{
\begin{aligned}
&\argmin_{\mb{b} \in \Delta^{m-1}} \max_{r \in \Delta(\mathcal X^\square)}
   \lbp
   - \frac{\rho_u}{\rho_u - 1} \KLD{r}{\tilde{p}}
   + \KLD{r}{\bar{q}_{\mb{b}}}
   \rbp,
&& \rho_u > 1;\\
&\argmin_{\mb{b} \in \Delta^{m-1}} \KLD{p}{\bar{q}_{\mb{b}}},
&& \rho_u = 1;\\
&\argmin_{\mb{b} \in \Delta^{m-1}} \min_{r \in \Delta(\mathcal X^\square)}
   \lbp
   \frac{\rho_u}{1-\rho_u} \KLD{r}{\tilde{p}}
   + \KLD{r}{\bar{q}_{\mb{b}}}
   \rbp,
&& \rho_u < 1.
\end{aligned}
\right.
\end{align*}
For each fixed portfolio $\mb{b}$, the optimizer $r^\star$ has the closed-form expression
$r^\star(\mb{x}) \propto \tilde{p}(\mb{x})^{\rho_u}\bar{q}_\mb{b}(\mb{x})^{1-\rho_u}$.
\end{theo}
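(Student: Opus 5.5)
The plan is to chain three facts already established: the equivalence of expected CRRA utility and $G_{\rho_u}$ from Remark~\ref{rmk:utility_growth_equiv}, the projection reformulation of Theorem~\ref{thm:renyi_projection}, and the variational formula of Proposition~\ref{prop:var_renyi} applied with $\alpha=\rho_u$, $p=\tilde p$, $q=\bar q_{\mb b}$ on $\mcal A=\mcal X^\square$. The first equality $\argmax \mds E_p[u(W)]=\argmax G_{\rho_u}(W)$ holds because $G_{\rho_u}$ is a strictly increasing transform of the expected utility. Theorem~\ref{thm:renyi_projection} then identifies this set with $\argmin_{\mb b}\rD{\rho_u}{\tilde p}{\bar q_{\mb b}}$. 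For $\rho_u=1$ the projection is already $\KLD{p}{\bar q_{\mb b}}$, since $\tilde p=p$ at $\beta=1$, and nothing further is needed.

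For $\rho_u\neq1$, Proposition~\ref{prop:var_renyi} gives, for each fixed $\mb b$,
\begin{equation*}
(1-\rho_u)\rD{\rho_u}{\tilde p}{\bar q_{\mb b}}=\min_{r\in\Delta(\mcal X^\square)}\bigl\{\rho_u\KLD{r}{\tilde p}+(1-\rho_u)\KLD{r}{\bar q_{\mb b}}\bigr\}.
\end{equation*}
I then divide by $1-\rho_u$ and split by sign. If $\rho_u<1$, the factor is positive, so the min is preserved. This gives $\rD{\rho_u}{\tilde p}{\bar q_{\mb b}}=\min_r\{\tfrac{\rho_u}{1-\rho_u}\KLD{r}{\tilde p}+\KLD{r}{\bar q_{\mb b}}\}$, and minimizing over $\mb b$ produces the stated joint min--min. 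If $\rho_u>1$, dividing by the negative number $1-\rho_u$ turns the min into a max. This gives $\rD{\rho_u}{\tilde p}{\bar q_{\mb b}}=\max_r\{-\tfrac{\rho_u}{\rho_u-1}\KLD{r}{\tilde p}+\KLD{r}{\bar q_{\mb b}}\}$, and minimizing over $\mb b$ yields the min--max form. Since the identity holds pointwise in $\mb b$, the argmin sets coincide. The closed form $r^\star\propto\tilde p^{\rho_u}\bar q_{\mb b}^{1-\rho_u}$ is read off directly from the unique minimizer in Proposition~\ref{prop:var_renyi}; it is also the unique maximizer in the $\rho_u>1$ case, because multiplying by a negative constant swaps minimizer and maximizer.

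The main obstacle is checking that Proposition~\ref{prop:var_renyi} applies on $\mcal X^\square$, where $\tilde p$ vanishes outside $\supp(p_{\mb X})$ and $\bar q_{\mb b}$ may vanish at some points, such as $\mb x=\mb 0$. I need $Z_{\rho_u}(\tilde p,\bar q_{\mb b})\in(0,\infty)$. Positivity follows from the standing assumption $\braket{\mb b,\mb x}>0$ on $\supp(p_{\mb X})$. Finiteness for $\rho_u>1$ needs $\supp\tilde p\subseteq\supp\bar q_{\mb b}$, which the same assumption guarantees. I will also adopt the convention $\KLD{r}{\cdot}=\infty$ off absolute continuity, so that the max over $r$ for $\rho_u>1$ is effectively attained at $r\ll\tilde p$, where it is finite and equals $r^\star$.
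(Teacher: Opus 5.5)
Your proposal is correct and follows the paper's route. The paper obtains the theorem by applying Proposition~\ref{prop:var_renyi} with $\alpha=\rho_u$ to the projection problem of Theorem~\ref{thm:renyi_projection} and dividing by $1-\rho_u$, so that the min becomes a max when $\rho_u>1$. The paper gives no treatment of supports, so your discussion goes beyond it, but one refinement is needed. For $\rho_u>1$, setting $\KLD{r}{\cdot}=\infty$ off absolute continuity is not enough: you should declare the whole objective to be $-\infty$ whenever $r\not\ll\tilde p$ (equivalently, restrict $r$ to $\supp\tilde p$). Otherwise a point such as $\mb x=\mb 0\in\mcal X^\square$, where both $\tilde p$ and $\bar q_{\mb b}$ vanish, gives the undefined expression $-\infty+\infty$.
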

Theorem~\ref{thm:renyi_variational} converts the Rényi information projection problem (Theorem~\ref{thm:renyi_projection}) into a Blahut--Arimoto-style alternating optimization through KL-divergence information projection.
The auxiliary state law $r$ has a closed-form update, and the portfolio block for $\mb b$ is a convex subproblem.

\section{Numerical Experiments}
\label{sec:num}

\begin{figure*}[t]
\centering
\includegraphics[width=0.95\textwidth]{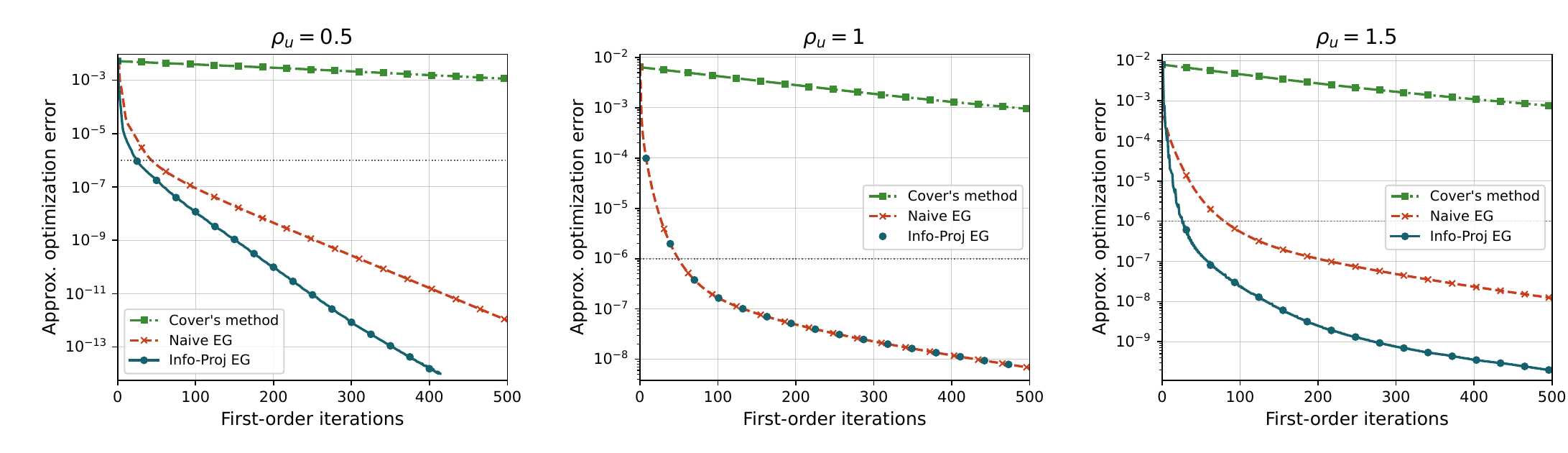}
\caption{Numerical comparison for CRRA portfolio selection.}
\label{fig:dirichlet_portfolio}
\end{figure*}

We implement the alternating optimization suggested by Theorem~\ref{thm:renyi_variational}.
The variational characterization is valid for all \(\rho_u>0\), but our numerical study focuses on \(\rho_u\in(0,2)\), where the implementation is supported by analysis: alternating-minimization descent for \(\rho_u\in(0,1)\), reduction to the expected-log problem at \(\rho_u=1\), and local contractivity of the alternating map for \(\rho_u\in(1,2)\).
We defer the precise update rules and supporting analyses to Appendix~\ref{app:num-analysis}.
For the \(\rho_u \ge 2\) regime, developing efficient implementations of the Rényi information projection (Theorem~\ref{thm:renyi_projection}) or its variational formulation (Theorem~\ref{thm:renyi_variational}) is left for future work.

The resulting method, called \emph{Info-Proj EG}, uses the closed-form auxiliary-law update and then performs exponentiated-gradient (EG) steps with Armijo backtracking on the fixed-\(r\) logarithmic portfolio objective~\cite{Armijo1966Lipschitz,LiCevher2019EGArmijo}.
We compare it with two baselines.
The first, called \emph{Naive EG}, directly minimizes the CRRA-equivalent loss.
The second, labeled \emph{Cover's method} in Fig.~\ref{fig:dirichlet_portfolio}, is Cover's multiplicative update at \(\rho_u=1\) and the corresponding FOC-induced fixed-point analogue for \(\rho_u\neq1\).

For \(j\in[k]\), let \(q_j(\mb b):=(M_X\mb b)_j\).
Info-Proj EG calculates
\begin{equation}
r_j(\mb{b}) \propto p_jq_j(\mb{b})^{1-\rho_u},
\end{equation}
and then updates the portfolio by minimizing
\begin{equation} \label{eq:infor_proj_obj}
f_r(\mb{b})=-\sum_{j=1}^k r_j\log q_j(\mb{b}).
\end{equation}
Naive EG minimizes
\begin{equation} \label{eq:naive_obj}
F_{\rho_u}^{\rm naive}(\mb{b})=
\begin{cases}
-\sum_{j=1}^k p_j q_j(\mb{b})^{1-\rho_u}, & 0<\rho_u<1,\\
-\sum_{j=1}^k p_j\log q_j(\mb{b}), & \rho_u=1,\\
\sum_{j=1}^k p_j q_j(\mb{b})^{1-\rho_u}, & \rho_u>1.
\end{cases}
\end{equation}
For each asset \(i\in[m]\), Cover's FOC-induced fixed-point update is
\begin{equation}\label{eq:cover_update}
b_i^{(t+1)} \leftarrow b_i^{(t)}\frac{\left(\nabla_{\mb b}\mathbb{E}_{p}\!\left[u\!\lp \langle \mb b,\mb X\rangle \rp\right](\mb b^{(t)})\right)_i}{\left\langle \mb b^{(t)},\nabla_{\mb b}\mathbb{E}_{p}\!\left[u\!\lp \langle \mb b,\mb X\rangle \rp\right](\mb b^{(t)})\right\rangle}.
\end{equation}

For \(\rho_u \in (0,1)\), Theorem~\ref{thm:renyi_variational} gives an alternating-minimization formulation, with a closed-form \(r\)-update and an EG/Armijo portfolio step.
For \(\rho_u>1\), the variational form becomes min--max, so the corresponding comparison is empirical and does not imply a uniform advantage over Naive EG.
Additional numerical detail is given in Appendix~\ref{app:num-analysis}.

Fig.~\ref{fig:dirichlet_portfolio} is computed on a \((k,m)=(100,50)\) payoff matrix, with state probabilities \(p\) drawn from \(\operatorname{Dir}(10\cdot\mds{1}_k)\).
In this instance, Info-Proj EG reaches smaller optimization error in fewer iterations than the two baselines for \(\rho_u\in\{0.5,1.5\}\), while matching Naive EG at \(\rho_u=1\), where both EG methods optimize the same expected log-wealth objective.

\section{Discussion}

A natural modeling question is how many assumptions one should impose on the market model for tractability.
Since Samuelson~\cite{Samuelson1969lifetime} and Merton~\cite{Merton1969lifetime}, portfolio selection has often been studied through dynamic formulations with intertemporal market dynamics or structural assumptions.
Later work further incorporates predictive state variables, parametric return dynamics, and trading frictions~\cite{AitSahaliaBrandt2001Variable,CampbellEtAl2004VAR,MaSiuZhu2019ReturnPredictabilityTC}.
For a broad modern account of the computational side of portfolio optimization, see~\cite{Palomar2025portfolio}.
By contrast, this work assumes only finite support and asks what information-theoretic structure is already present in the single-period CRRA portfolio selection problem.

Following Markowitz's classical two-step viewpoint in portfolio theory~\cite{Markowitz1952Portfolio}, one first estimates a probability distribution \(p\) for the market payoff vector \(\mb X\), and then chooses a portfolio \(\mb b\) accordingly.
A key limitation of this work is that we focus only on the second step: selecting \(\mb b\) given a payoff law, while leaving the estimation of \(p\) outside the scope of the paper.

Our formulation should be distinguished from Online Portfolio Selection (OPS)~\cite[Ch.\,13]{Orabona2019ModernOnline}, where no market law is assumed and performance is measured relative to competing experts.
Classical milestones of OPS include Cover's universal portfolio~\cite{Cover1991universal}, \cite[Ch.\,16]{Cover2006}, the sequential-learning viewpoint~\cite{Merhav1992sequential}, and multiplicative-update methods~\cite{Helmbold1998OPSMultiplicative}.
Recent advances continue to sharpen this line through data-dependent regret guarantees~\cite{Tsai2023data} and efficient near-optimal algorithms~\cite{Jezequel2025Efficient}.

The objective in this work is the expected CRRA utility, in contrast to the mean--variance paradigm of Markowitz~\cite{Markowitz1952Portfolio} and Sharpe~\cite{Sharpe1964Risk}, which mainly considers the first and second moments.
This mean--variance approach is useful in data-limited regimes, but higher-order moments matter when returns exhibit asymmetry or heavy tails~\cite{Harvey2000ConditionalSkewness,Jondeau2006HigherMoments,Cvitanic2008HigherMoments}.

One possible future direction is \textit{estimation-aware information projection}.
In practice, the market law \(p\) is usually just an estimate.
It would be useful to understand how sensitive the optimal portfolio \(\mb b^\star(p)\) is to perturbations of \(p\), and to develop portfolio-selection policies that project not only a single empirical law, but a neighborhood accounting for estimation error around it.


\clearpage \balance
\bibliographystyle{IEEEtran}
\bibliography{ref}


\newpage
\nobalance
\appendices
\onecolumn
\section{Calculation in Example~\ref{ex:linear_bin}}
\label{app:b_indep_pf}


Specifically, the partition function $Z_q$ in Example~\ref{ex:linear_bin} can be calculated as
\begin{align*}
&Z_q = \sum_{\mb{x}} \braket{\mb{b},\mb{x}} \overset{(a)}{=} c\sum_{\mb{\bar{x}}} \braket{\mb{b},\mb{\bar{x}}} \overset{(b)}{=} c \cdot 2^m \mds{E}_{\mb{\bar{X}} \overset{\mrm{i.i.d.}}{\sim}\mrm{Ber}(\frac{1}{2})} \lb \braket{\mb{b},\mb{\bar{X}}}\rb\\
&= c \cdot 2^m \braket{\mb{b},\mds{E}_{\mb{\bar{X}} \overset{\mrm{i.i.d.}}{\sim}\mrm{Ber}(\frac{1}{2})} \lb \mb{\bar{X}}\rb} = c \cdot 2^m \sum_{i=1}^m b_i \cdot \frac{1}{2} = c \cdot 2^{m-1},
\end{align*}
where $(a)$ follows from normalizing $\mb{x}$ to $\mb{\bar{x}}$ such that $\bar{x}_i \in \{0, 1\}$;
$(b)$ follows from the symmetry of binary support for each $\mb{\bar{x}}_i$, namely, the number of appearances of \(1\) equals that of \(0\) across all binary vectors \(\bar x\).

\section{Numerical Analysis}
\label{app:num-analysis}

This appendix records the numerical update rules used in Section~\ref{sec:num} and gives two analytical interpretations of Fig.~\ref{fig:dirichlet_portfolio}.
First, in the alternating-minimization regime \(\rho_u \in (0,1)\), accepted fixed-\(r\) EG steps decrease the R{\'e}nyi projection objective.
Second, the alternating update is locally contractive near a regular interior fixed point for \(\rho_u \in (0,2)\).
For \(j\in[k]\), let \(q_j(\mb b):=(M_X\mb b)_j\).

\subsection{Implemented Update Rules}

For a differentiable objective \(F\) on the simplex, exponentiated gradient (EG) is the KL-proximal mirror-descent step
\begin{equation}
\mb b^+=\mcal{E}_{\eta,F}(\mb b):=\argmin_{\mb u\in\Delta^{m-1}}\left\{\eta\langle\nabla F(\mb b),\mb u-\mb b\rangle+\KLD{\mb u}{\mb b}\right\}.
\end{equation}
Equivalently,
\begin{equation}
\left(\mcal{E}_{\eta,F}(\mb b)\right)_i=\frac{b_i\exp\!\left(-\eta(\nabla F(\mb b))_i\right)}{\sum_{\ell=1}^m b_\ell\exp\!\left(-\eta(\nabla F(\mb b))_\ell\right)}, \quad i\in[m].
\end{equation}
Armijo backtracking selects the stepsize along this EG path.
Given \(c\in(0,1)\), the candidate is accepted once
\begin{equation}
F(\mcal{E}_{\eta,F}(\mb b))\le F(\mb b)+c\langle\nabla F(\mb b),\mcal{E}_{\eta,F}(\mb b)-\mb b\rangle.
\end{equation}
The line-search rule follows Armijo~\cite{Armijo1966Lipschitz}; convergence of EG/Armijo on the simplex for a fixed convex differentiable objective is studied in~\cite{LiCevher2019EGArmijo}.
In Fig.~\ref{fig:dirichlet_portfolio}, one first-order iteration means one accepted EG/Armijo step for Naive EG and Info-Proj EG, and one multiplicative fixed-point update for the curve labeled \emph{Cover's method}.

The method suggested by Theorem~\ref{thm:renyi_variational}, called Info-Proj EG, sets the auxiliary law by
\begin{equation}
r_j(\mb b)=\frac{p_jq_j(\mb b)^{1-\rho_u}}{\sum_{\ell=1}^k p_\ell q_\ell(\mb b)^{1-\rho_u}}, \quad j\in[k].
\end{equation}
It then applies EG/Armijo to the fixed-\(r\) logarithmic portfolio objective
\begin{equation}
f_r(\mb b):=-\sum_{j=1}^k r_j\log q_j(\mb b).
\end{equation}
At \(\rho_u=1\), the auxiliary law reduces to \(r(\mb b)=p\).

The direct baseline, called Naive EG, applies EG/Armijo to the CRRA-equivalent loss
\begin{equation}
F_{\rho_u}^{\rm naive}(\mb b)=\begin{cases} -\sum_{j=1}^k p_jq_j(\mb b)^{1-\rho_u}, & 0<\rho_u<1,\\ -\sum_{j=1}^k p_j\log q_j(\mb b), & \rho_u=1,\\ \sum_{j=1}^k p_jq_j(\mb b)^{1-\rho_u}, & \rho_u>1. \end{cases}
\end{equation}

The curve labeled \emph{Cover's method} in Fig.~\ref{fig:dirichlet_portfolio} is generated as follows.
For \(\rho_u=1\), it is Cover's multiplicative update in~\eqref{eq:cover_update}.
For \(\rho_u\neq1\), it is the fixed-point update suggested by the first-order condition~\eqref{eq:crra_foc}:
\begin{equation}
b_i^{(t+1)} \leftarrow b_i^{(t)}\frac{\sum_{j=1}^k p_j(M_X)_{ji}q_j(\mb b^{(t)})^{-\rho_u}}{\sum_{j=1}^k p_jq_j(\mb b^{(t)})^{1-\rho_u}}, \quad i\in[m].
\end{equation}
This update preserves the simplex by normalization.
At \(\rho_u=1\), the denominator equals one, and the update reduces to Cover's logarithmic update.
For \(\rho_u\neq1\), we use it only as a fixed-point reference induced by the first-order condition.

\subsection{Descent in the Alternating-Minimization Regime for \(\rho_u \in (0,1)\)}

For \(\rho_u \in (0,1)\), Theorem~\ref{thm:renyi_variational} gives the alternating-minimization formulation
\begin{equation}
\min_{\mb b\in\Delta^{m-1}}\min_{r\in\Delta_k} J_{\rho_u}(r,\mb b),
\end{equation}
where
\begin{equation}
J_{\rho_u}(r,\mb b):=\frac{\rho_u}{1-\rho_u}\KLD{r}{\tilde p}+\KLD{r}{\bar q_{\mb b}}.
\end{equation}
For fixed \(\mb b\), the minimizer is \(r(\mb b)\).
For fixed \(r\), minimizing \(J_{\rho_u}(r,\mb b)\) over \(\mb b\) is equivalent, up to constants independent of \(\mb b\), to minimizing \(f_r(\mb b)\).
Hence any accepted EG step that decreases \(f_{r^{(t)}}\) also decreases the R{\'e}nyi projection objective.
Indeed, if \(r^{(t)}=r(\mb b^{(t)})\) and \(f_{r^{(t)}}(\mb b^{(t+1)})\le f_{r^{(t)}}(\mb b^{(t)})\), then
\begin{align}
\rD{\rho_u}{\tilde p}{\bar q_{\mb b^{(t+1)}}}
&=\min_{r\in\Delta_k} J_{\rho_u}(r,\mb b^{(t+1)}) \le J_{\rho_u}(r^{(t)},\mb b^{(t+1)}) \notag\\
&\le J_{\rho_u}(r^{(t)},\mb b^{(t)}) =\rD{\rho_u}{\tilde p}{\bar q_{\mb b^{(t)}}}.
\end{align}
This is the objective-level descent property available in the low-risk-aversion \((\rho_u < 1)\) regime.

At \(\rho_u=1\), the auxiliary law reduces to \(r=p\), and the fixed-\(r\) logarithmic objective coincides with the Naive EG objective.
Thus Info-Proj EG and Naive EG solve the same fixed-law log-utility allocation problem.
With identical EG/Armijo scheduling, their iterates coincide, as reflected in Fig.~\ref{fig:dirichlet_portfolio}.

For \(\rho_u>1\), the same variational representation becomes
\begin{equation}
\min_{\mb b\in\Delta^{m-1}}\max_{r\in\Delta_k}\left\{-\frac{\rho_u}{\rho_u-1}\KLD{r}{\tilde p}+\KLD{r}{\bar q_{\mb b}}\right\}.
\end{equation}
A decrease of the fixed-\(r\) logarithmic surrogate therefore does not certify descent of the R{\'e}nyi projection objective after the maximizing auxiliary law is updated.
The \(\rho_u>1\) curves are empirical and do not constitute a uniform dominance claim over Naive EG.
The next subsection explains why the alternating formulation can nevertheless remain locally contractive in the range \(\rho_u \in (1,2)\).

\subsection{Local Contractivity of the Alternating Map for \(\rho_u\in(0,2)\)}

The descent argument above applies directly to the alternating-minimization regime \(\rho_u\in(0,1)\).
For \(\rho_u>1\), the variational form becomes min--max, so decreasing a fixed-\(r\) surrogate no longer certifies descent of the full Rényi projection objective.
Nevertheless, the exact alternating map has a simple local fixed-point structure.

Define
\begin{equation}
T(\mb b):=\argmin_{\mb u\in\Delta^{m-1}} f_{r(\mb b)}(\mb u),
\end{equation}
where \(r(\mb b)\) is the closed-form auxiliary law and \(f_r\) is the fixed-\(r\) logarithmic portfolio objective.
The implemented Info-Proj EG method may be viewed as an inexact realization of this map, with the fixed-\(r\) minimization carried out by EG/Armijo steps.
The result below concerns the exact map \(T\) near a regular interior fixed point.
It is therefore a local fixed-point stability statement, not a global convergence guarantee for the finite-step implementation.

\begin{prop}[Local contractivity of the alternating map]
\label{prop:local_contractivity}
Let \(\mb b^\star\in\operatorname{relint}(\Delta^{m-1})\) be a fixed point of \(T\), and set \(q^\star=M_X\mb b^\star\) and \(r^\star=r(\mb b^\star)\).
Let \(P\in\mathbb R^{m\times(m-1)}\) have columns forming an orthonormal basis of the tangent space \(\{\mb v\in\mathbb{R}^m:\mds{1}^\top\mb v=0\}\).
Assume that the fixed-\(r^\star\) logarithmic portfolio objective is strictly convex on this tangent space, namely
\begin{equation}
H:=P^\top M_X^\top \operatorname{diag}\!\left(\frac{r_j^\star}{(q_j^\star)^2}\right)M_XP\succ0.
\end{equation}
Then \(T\) is locally well defined around \(\mb b^\star\), and its tangent-space Jacobian at \(\mb b^\star\) is
\begin{equation}
J_T=(1-\rho_u)I.
\end{equation}
Consequently, for every sufficiently small tangent perturbation \(\delta\mb b\),
\begin{equation}
T(\mb b^\star+\delta\mb b)-\mb b^\star=(1-\rho_u)\delta\mb b+o(\|\delta\mb b\|).
\end{equation}
Thus the exact alternating map has local linear factor \(|1-\rho_u|\), and is locally contractive for \(\rho_u\in(0,2)\).
\end{prop}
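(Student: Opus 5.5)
The plan is an implicit-function-theorem computation. First I make $T$ concrete. For fixed $r$, the minimizer $\mb u$ of $f_r(\mb u)=-\sum_j r_j\log q_j(\mb u)$ over the relative interior of the simplex satisfies the stationarity condition
\[
P^\top M_X^\top \bigl(r\oslash q(\mb u)\bigr)=0,
\]
where $\oslash$ denotes entrywise division. Here I use that the gradient is $-M_X^\top(r\oslash q)$ and that the Lagrange multiplier for $\mds 1^\top\mb u=1$ is annihilated by $P^\top$. Write $\mb u=\mb b^\star+P\mb v$ and $\mb b=\mb b^\star+P\mb w$. Then define
\[
\Phi(\mb w,\mb v):=P^\top M_X^\top\bigl(r(\mb b^\star+P\mb w)\oslash q(\mb b^\star+P\mb v)\bigr)\in\mathbb R^{m-1}.
\]
At the fixed point, $\Phi(0,0)=0$, since $\mb b^\star$ is an interior minimizer of $f_{r^\star}$.

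Next I apply the implicit function theorem. Differentiating in $\mb v$ gives $\partial_{\mb v}\Phi=-P^\top M_X^\top\operatorname{diag}(r^\star_j/(q_j^\star)^2)M_XP=-H$, which is invertible by assumption. So near $\mb w=0$ there is a unique smooth solution $\mb v(\mb w)$. Because $H\succ0$, $f_{r}$ stays strictly convex on the tangent space for $r$ near $r^\star$, so this stationary point is the global minimizer and stays interior. This shows $T$ is locally well defined and $J_T=d\mb v/d\mb w=H^{-1}\partial_{\mb w}\Phi$.

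The key step is computing $\partial_{\mb w}\Phi$. From $r_j(\mb b)\propto p_j q_j(\mb b)^{1-\rho_u}$ we get
\[
\partial r_j=(1-\rho_u)\,r_j\Bigl(\frac{\partial q_j}{q_j}-\sum_\ell r_\ell\frac{\partial q_\ell}{q_\ell}\Bigr),
\]
with $\partial q=M_XP\,\delta\mb w$. Substituting into $\Phi$ gives
\[
\partial_{\mb w}\Phi\,\delta\mb w=(1-\rho_u)\Bigl[P^\top M_X^\top\operatorname{diag}\!\Bigl(\frac{r^\star}{(q^\star)^2}\Bigr)M_XP\,\delta\mb w-c\,P^\top M_X^\top\bigl(r^\star\oslash q^\star\bigr)\Bigr],
\]
where $c=\sum_\ell r^\star_\ell(M_XP\delta\mb w)_\ell/q^\star_\ell$ is a scalar. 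The second term vanishes by the stationarity condition $\Phi(0,0)=0$, and the first term is $(1-\rho_u)H\delta\mb w$. Hence $J_T=H^{-1}(1-\rho_u)H=(1-\rho_u)I$.

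A first-order Taylor expansion of the smooth map $\mb v(\mb w)$ then gives the stated $o(\|\delta\mb b\|)$ expansion. Since $|1-\rho_u|<1$ exactly when $\rho_u\in(0,2)$, the map is locally contractive in that range.

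The main obstacle I expect is the bookkeeping around the normalization: the $-c$ term coming from the normalizer of $r$, and the simplex multiplier. The whole argument hinges on recognizing that both are killed by the fixed-point stationarity condition projected through $P^\top$. I would check that step carefully, including that the relevant $\mb b^\star$-stationarity is for $f_{r^\star}$ with $r^\star=r(\mb b^\star)$, which is exactly the fixed-point hypothesis.
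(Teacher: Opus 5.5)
Your proposal is correct and follows the paper's proof essentially step for step. Both arguments apply the implicit-function theorem to the tangent-space first-order condition using $H\succ0$, differentiate the auxiliary law to get the factor $(1-\rho_u)$, and kill the normalizer term because $P^\top M_X^\top(r^\star\oslash q^\star)=0$ at the fixed point. The only cosmetic difference is that you read this cancellation directly off $\Phi(0,0)=0$, whereas the paper first computes the KKT multiplier $\lambda=1$ before using $P^\top\mds{1}=0$; the value of $\lambda$ is irrelevant there.
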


\begin{proof}
The matrix \(H\) is the Hessian of the fixed-\(r^\star\) portfolio objective restricted to the simplex tangent space.
Thus \(H\succ0\) gives local uniqueness and differentiability of the minimizer defining \(T\), by applying the implicit-function theorem to the tangent-space first-order condition
\begin{equation}
P^\top\nabla f_{r(\mb b)}(T(\mb b))=0.
\end{equation}

Define \(D_q:=\operatorname{diag}(1/q_j^\star)\) and \(S_r:=\operatorname{diag}(r^\star)-r^\star(r^\star)^\top\).
Differentiating the auxiliary update at \(\mb b^\star\) gives
\begin{equation}
\dd r=(1-\rho_u)S_rD_qM_X\,\dd\mb b.
\end{equation}
Differentiating the tangent-space first-order condition then yields
\begin{equation}
J_T=(1-\rho_u)H^{-1}C,
\end{equation}
where
\begin{equation}
C:=P^\top M_X^\top D_qS_rD_qM_XP.
\end{equation}
It remains to show that \(C=H\).

Since \(\mb b^\star\in\operatorname{relint}(\Delta^{m-1})\) minimizes \(f_{r^\star}\) over the simplex, the KKT condition gives
\begin{equation}
M_X^\top\left(\frac{r^\star}{q^\star}\right)=\lambda\mds{1}.
\end{equation}
Multiplying by \((\mb b^\star)^\top\), we obtain
\begin{equation}
\lambda=(\mb b^\star)^\top M_X^\top\left(\frac{r^\star}{q^\star}\right)=(q^\star)^\top\left(\frac{r^\star}{q^\star}\right)=\sum_{j=1}^k r_j^\star=1.
\end{equation}
Therefore
\begin{equation}
M_X^\top\left(\frac{r^\star}{q^\star}\right)=\mds{1}.
\end{equation}
Expanding \(C\), we get
\begin{align}
C &=P^\top M_X^\top \operatorname{diag}\!\left(\frac{r_j^\star}{(q_j^\star)^2}\right)M_XP \quad -P^\top M_X^\top\left(\frac{r^\star}{q^\star}\right)\left(\frac{r^\star}{q^\star}\right)^\top M_XP.
\end{align}
The first term is \(H\), while the second term vanishes because
\begin{equation}
P^\top M_X^\top\left(\frac{r^\star}{q^\star}\right)=P^\top\mds{1}=0.
\end{equation}
Hence \(C=H\), and therefore
\begin{equation}
J_T=(1-\rho_u)I.
\end{equation}
The stated first-order expansion follows from differentiability of \(T\).
Thus the local linear factor is \(|1-\rho_u|\), which is strictly smaller than one exactly when \(\rho_u\in(0,2)\).
\end{proof}

Proposition~\ref{prop:local_contractivity} explains why the information-projection formulation can remain locally stable for \(\rho_u\in(1,2)\), even though the variational representation is min--max rather than alternating minimization.
It does not imply global convergence of the finite-step EG/Armijo implementation.

\end{document}

%% file: macro.tex
\newcommand{\lp}{\left(}
\newcommand{\rp}{\right)}
\newcommand{\lb}{\left[}
\newcommand{\rb}{\right]}
\newcommand{\lbp}{\left\{}
\newcommand{\rbp}{\right\}}

\newcommand{\mcal}{\mathcal}

\newcommand{\mb}{\mathbf}

\newcommand{\mds}{\mathds}
\newcommand{\mrm}{\mathrm}

\newcommand{\argmin}{\mathop{\mathrm{arg\,min}}}
\newcommand{\argmax}{\mathop{\mathrm{arg\,max}}}

\newcommand{\e}{\mathrm{e}}

\newcommand{\supp}{\mathsf{supp}}

\newcommand{\dd}{\mathrm{d}}

\newcommand{\ET}[1]{\mathrm{H}\mkern-1.5mu\left( #1\right)}

\newcommand{\rET}[2]{\mathrm{H}_{#1}\mkern-1.5mu\left( #2\right)}

\newcommand{\KLD}[2]{\mathrm{D}\mkern-1.5mu\left( #1  \middle\Vert #2 \right)}

\newcommand{\rD}[3]{\mathrm{D}_{#1}\mkern-1.5mu\left( #2  \middle\Vert #3 \right)}